\newcommand{\x}{{\bm{x}}}  
\newcommand{\y}{{\bm{y}}}  
\newcommand{\z}{{\bm{z}}}  
\newcommand{\btheta}{{\bm{\theta}}} 
\newcommand{\f}{{\bm{f}}} 
\renewcommand{\S}{{S}}  
\renewcommand{\H}{{\bm{H}}}  
\newcommand{\T}{T_{>0}}  
\newcommand{\xmax}{{\bm{x}_\text{max}}}  
\DeclareMathOperator*{\argmin}{arg\,min}
\DeclareMathOperator*{\argmax}{arg\,max}
\newtheorem{theorem}{Theorem}
\theoremstyle{definition}
\newtheorem{definition}{Definition}
\title{Supervised Reconstruction for Silhouette Tomography}
\author{ 
Evan Bell$^a$, Michael T. McCann$^b$, Marc Klasky$^b$\\
$^a$Michigan State University; East Lansing, MI\\
$^b$Theoretical Division, Los Alamos National Laboratory; Los Alamos, NM
}
\begin{document}

\maketitle

\begin{abstract}
    In this paper, we introduce \textbf{silhouette tomography},
    a novel formulation of X-ray computed tomography that relies only on the geometry of the imaging system.
    We formulate silhouette tomography mathematically and provide a simple method for obtaining a particular solution to the problem, assuming that any solution exists.
    We then propose a supervised reconstruction approach that uses a deep neural network to solve the silhouette tomography problem. We present experimental results on a synthetic dataset that demonstrate the effectiveness of the proposed method.
\end{abstract}

\section{Introduction}
X-ray computed tomography (CT)
is an imaging modality that involves collecting X-ray projection images
of the same object from many angles
and using them to reconstruct a 3D image of the object.
It has applications in medicine~\cite{webb_introduction_2002},
nondestructive testing~\cite{maire_quantitative_2014},
and security~\cite{wu_object_2023}.
When the number of projection angles is large
and the scanner is well characterized in terms of
its geometry,
the spectrum of the X-rays used,
and the spectral response of the detector,
CT reconstruction may be accurately performed by either linear~\cite{bracewell_inversion_1967} methods
or model-based iterative reconstruction~\cite{yu_fast_2011,mcgaffin_alternating_2015}.

More recently,
research on CT reconstruction has focused on so-called \emph{extreme imaging},
wherein the projections are limited in number~\cite{niu_sparse_2014,lee_view_2017},
angular coverage~\cite{wei_joint_2018},
or signal-to-noise ratio~\cite{li_super_2019,shan_competitive_2019}.
In these settings, heavy use of prior information---%
either in the form of regularization or machine learning%
---is needed to form an adequate reconstruction.
Especially in the case of learning-based approaches to CT reconstruction,
many methods  have been developed and evaluated in a setting where 
both testing and training data are generated in silico
and use exactly the same imaging model (sometimes called \emph{inverse crime}).
While this approach is useful at the proof-of-concept stage,
it does not provide an accurate picture of how these methods would perform
on real, experimental data.

In this work,
we present \emph{silhouette tomography (ST)},  a novel formulation of the X-ray CT reconstruction problem
that sidesteps the need to accurately model a CT system in order to
perform model-based reconstruction or generate training data for learning-based reconstruction.
In ST,
rather that relying on the grey-level values of the measured projections,
we instead only use them to determine if an object of interest is present
in each pixel (equivalently, along each ray).
In doing so,
we transform the CT reconstruction problem into one that is purely geometric,
and therefore only requires knowledge of the geometry of the imaging system
(e.g., the projection angles and the cone angle).
This simplifies system calibration
and increases the chance that a learning-based algorithm trained on simulation data
will generalize well to real, experimental data.
This approach has the added benefit of enabling reconstruction of objects in cluttered scenes
and/or in the presence of highly-absorbing objects,
provided that the object of interest can be accurately segmented in each projection view.
On the other hand,
choosing the ST formulation instead of the CT one
generally leads to a less well-posed and therefore more challenging reconstruction problem.

\textbf{Related work.}
In binary tomography~\cite{weber_binary_2004,weber_benchmark_2006,Varga2011,kadu_convex_2020},
the object to be reconstructed is binary,
but---distinct from ST---%
the measurements are real-valued.
In shadow tomography~\cite{savarese_shape_2005},   
the  the actual shadows cast by an object,
rather than its X-ray projections,
are used to determine its shape.

\textbf{Outline.}  
In the next section,
we formulate the silhouette tomography reconstruction problem
and define the maximal reconstruction,
which provides a simple way of finding at least one solution to any ST problem.
We then present a learning based-approach to ST reconstruction.
Finally,
we provide an experimental comparison of these methods
and conclude.

\section{Silhouette Tomography}
The silhouette tomography problem 
is to recover the nonzero support of an object
from its binarized X-ray projections.
Mathematically,
we represent the object as a binary vector $\x \in \{0, 1\}^{N}$,
with either $N = N_1 \times N_2$ (height $\times$ width, 2D problems)
or $N = N_1 \times N_2 \times N_3$ (depth $\times$ height $\times$ width, 3D problems).
Let $\x[n] = 1$ if the object occupies pixel/voxel $n$ 
and $\x[n] = 0$ otherwise.
The binarized X-ray projection measurements are
$\y \in \{0, 1\}^M$,
with $M = V \times M_1$ (number of views $\times$ length, 2D problems)
or $M = V \times M_1 \times M_2$ (number of views $\times$ height $\times$ width, 3D problems).
Let $\y[m]=1$ if the object occupies any of the pixels/voxels
along ray $m$
and $\y[m]=1$ otherwise.
We can relate the projections $\y$ to the object $\x$
via the silhouette tomography operator,
$\S : \mathbb{R}^N \to \{0, 1\}^M$,
defined as
\begin{equation} \label{eq:forward}
    \y = \S(\x) = \T(\H \x),
\end{equation}
where
$\H \in \mathbb{R}^{M \times N}$
is a linear X-ray transform operator
with $\H[m, n] \ge 0$
and $T_{>0}$ is a thresholding operator defined by
\begin{equation}
     \T(\x)[n] = \begin{cases}
         1 & \x[n] > 0; \\
         0 & \text{otherwise}.
     \end{cases}
\end{equation}

In contrast to the linear tomography problem,
the silhouette tomography problem generally admits multiple solutions even when the number of views is large.
The linear tomography problem is to recover a real-valued object $\x \in \mathbb{R}^{N}$
from its real-valued measurements $\y \in \mathbb{R}^M$
given by 
\begin{equation}
    \label{eq:linear}
    \y = \H \x,
\end{equation}
where $\H \in \mathbb{R}^{M \times N}$ is again
a linear X-ray transform operator.
For this linear problem,
the solution is unique when $M \ge N$ because $\H$ can be assumed to have linearly independent rows.
In silhouette tomography,
we have strictly less information because the measurements are binary.
While a theoretical analysis of its ill-posedness is beyond the scope of this paper,
it is easy to come up with examples of shapes that cannot be distinguished
by measurements of the form \eqref{eq:forward},
see, e.g.,  Figure~\ref{fig:multiple_solutions}.

\begin{figure}
    \centering
    \hfill{}
\begin{tikzpicture}
\draw [gray, fill] plot [smooth cycle] coordinates {(0,0) (1,1) (3,1) (3,0) (2,-1)};
\draw [black, thick] (-.5, -.5) -- (1.5, 1.5);
\end{tikzpicture}\hfill{}
\begin{tikzpicture}
\draw [gray, fill] plot [smooth cycle] coordinates {(0,0) (1,1) (3,1) (3,0) (2,-1)};
\draw [white, fill] plot [smooth cycle] coordinates {(1.4,1) (2,.5) (2.5,1) (2,2)};
\draw [black, thick] (-.5, -.5) -- (1.5, 1.5);
\end{tikzpicture}
\hfill{}      
    \caption{Silhouette tomography generally has multiple solutions.
    For example, each ray intersecting the left object also intersects the right one
    and each ray not intersecting the left object also does not intersect the right one.
    }
    \label{fig:multiple_solutions}
\end{figure}
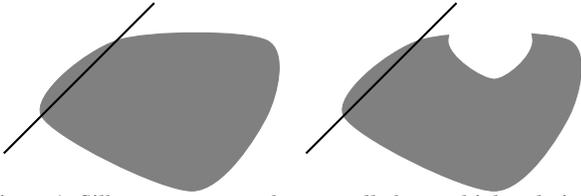

For any silhouette tomography problem,
we now show that
there is always one solution that is uniquely defined and easy to compute;
we call it the maximal solution.
\begin{definition}[Maximal solution]
    The \emph{maximal solution}, $\xmax(\y)$,
    of \eqref{eq:forward} 
is the $\x$ that has the maximal norm among
all $\x$'s that satisfy \eqref{eq:forward},
\begin{equation}
    \xmax(\y) = \argmax_\x \|\x \|_2^2 \quad \text{s.t.} \quad \S(\x) = \y.
\end{equation}
We will sometimes drop the argument $(\y)$
when it is clear from context.
\end{definition}

\begin{theorem}[Computing the maximal solution]
When any solution to \eqref{eq:forward} exists,
the maximal solution of \eqref{eq:forward} is given by
\begin{equation} \label{eq:xmax}
    \xmax(\y) = \neg \T(\H^T (\neg \y)),
\end{equation}
where $\neg$ denotes Boolean negation (i.e., swapping 0 and 1).    
\end{theorem}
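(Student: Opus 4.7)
The plan is to characterize the right-hand side of~\eqref{eq:xmax} pointwise, then show it dominates every feasible $\x$, then show it is itself feasible. Write $\x^* = \neg\T(\H^T(\neg\y))$. Unpacking the definitions, $\x^*[n] = 1$ iff $(\H^T(\neg\y))[n] = 0$ iff there is no ray $m$ with both $\y[m] = 0$ and $\H[m,n] > 0$; in words, $\x^*$ turns off voxel $n$ precisely when some ``dark'' ray passes through it, and turns it on otherwise.

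Next I would show $\x \le \x^*$ componentwise for every feasible $\x$. Suppose $\x[n] = 1$ while $\x^*[n] = 0$. The characterization above supplies a ray $m$ with $\y[m] = 0$ and $\H[m,n] > 0$. Since $\H$ has nonnegative entries and $\x$ is nonnegative, $(\H\x)[m] \ge \H[m,n]\x[n] > 0$, hence $\T(\H\x)[m] = 1 \ne \y[m]$, contradicting feasibility.

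Then I would check that $\x^*$ is itself feasible. For $m$ with $\y[m] = 0$, every $n$ with $\H[m,n] > 0$ satisfies $\x^*[n] = 0$ by construction, so $(\H\x^*)[m] = 0$ and $\T$ returns $0 = \y[m]$. For $m$ with $\y[m] = 1$, I invoke the hypothesis that some feasible $\x$ exists: $(\H\x)[m] > 0$ produces an index $n$ with $\H[m,n] > 0$ and $\x[n] = 1$, and the dominance step just proved gives $\x^*[n] \ge \x[n] = 1$, so $(\H\x^*)[m] > 0$ and $\T$ returns $1 = \y[m]$. Since all feasible $\x$ are binary and dominated by $\x^*$, $\|\x\|_2^2 = \sum_n \x[n] \le \sum_n \x^*[n] = \|\x^*\|_2^2$, so $\x^*$ attains the maximum and equals $\xmax(\y)$.

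The subtle step is the feasibility check on rays with $\y[m] = 1$: the dominance argument alone only certifies that $\x^*$ is no larger than any valid object, not that $\x^*$ itself reproduces every ``1'' in $\y$. Producing a live voxel on every bright ray requires borrowing one from an assumed solution, which is precisely where the existence hypothesis does its work; without it, the formula could still output the correct upper bound even though no object at all explains the measurements.
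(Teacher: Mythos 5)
Your proposal is correct and follows essentially the same route as the paper: characterize where the formula $\neg\T(\H^T(\neg\y))$ forces zeros, use that to show it dominates every feasible $\x$ componentwise (which gives maximality of the norm), and verify feasibility, invoking the existence of some solution to handle rays with $\y[m]=1$. Your write-up is somewhat more explicit than the paper's about why the existence hypothesis is needed for the bright-ray feasibility check, but the underlying argument is the same.
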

\begin{proof}
We first show that $\xmax$ satisfies \eqref{eq:forward}.
Equation \eqref{eq:forward} implies 
that $\y[m]=1$ if and only if
that there exists an $n$
such that $\H[m,n] > 0$ and $\x[n]=1$.
Conversely,
$\y[m]=0$ if and only if
for all $n$ such that $\H[m,n] > 0$, $\x[n]=0$.
The maximal reconstruction has $\xmax[n] = 1$
for all $n$ except those that fall into the later case,
i.e., it is only zero where it must be:
by the construction \eqref{eq:xmax},
$\xmax(\y)[n] = 0$ if and only if $\H^T (\neg \y)[n] > 0$,
which implies that there exists $m$ such that $\y[m]=0$ and $\H[m, n] > 0$.
Thus, if a solution to \eqref{eq:forward} exists,
$\xmax$ is a solution to \eqref{eq:forward}.

Next, we show that $\xmax$ has maximal norm among all solutions to \eqref{eq:forward}.
Suppose $\z \in \{0,1\}^N$ with $||\z||_2^2 > ||\xmax||_2^2.$ 
Then, there exists some $n'$ such that $\z[n']=1$ but $\xmax[n']=0$. Then, by the same reasoning as above, there exists an $m'$ such that $\y[m'] = 0$ and $\H[m',n'] > 0$. We can then calculate
$(\H\z)[m'] = \sum_{j=1}^{N} \H[m',j]\z[j] \geq \H[m',n']\z[n'] > 0.$ Thus $\T(\H\z)[m'] = 1$, but $\y[m'] = 0$. Hence $\z$ is not a solution to \eqref{eq:forward}. This shows that $\xmax$ has the largest norm of any solution to \eqref{eq:forward}, as claimed.
\end{proof}

\textbf{Preparing Data.}
We now describe how ordinary X-ray CT projection data
can be processed so that it can be reconstructed using ST.
In the simplest case
wherein we are imaging a single object,
it suffices to simply threshold the projection images,
creating binary masks indicating the presence or absence of the object at each pixel in each projection.
If the goal is to reconstruct multiple objects,
then each one should be segmented separately
and one ST reconstruction performed for each.
Automating this segmentation process is outside the scope of the current work.

\section{Proposed Method: Supervised Reconstruction}\label{sec:proposed}
We propose a deep learning-based approach to solving the ST reconstruction problem. The approach is based on using a database of paired objects, $\x$,
and measurements, $\y$,
along with a known linear X-ray transform operator, $\H$,
to learn to reconstruct $\x$ from $\y$.
Thus, given $D$ tuples of training data $(\x_d, \y_d, \H), \ d=1,...,D$, 
we train our network weights according to
\begin{equation}
    \label{eq:training_objective}
    \btheta^{\bm{*}} = \argmin_\btheta \frac{1}{D} \sum_{d=1}^D ||\x_d - \f_\btheta(\H^T \y_d)||_2^2,
\end{equation}
where $\f$ is a neural network with parameters $\btheta$.

\textbf{Proposed Architecture.} Because the ST reconstruction problem is highly ill-posed, the choice of $\f$ is important for obtaining a useful solution to $\eqref{eq:training_objective}$. The model we propose is a deep U-Net based on the architecture described in \cite{deblurring_unet}.

\begin{figure*}
    \centering
    \includegraphics[trim={1in 1in 1in 1in},clip,width=\linewidth]{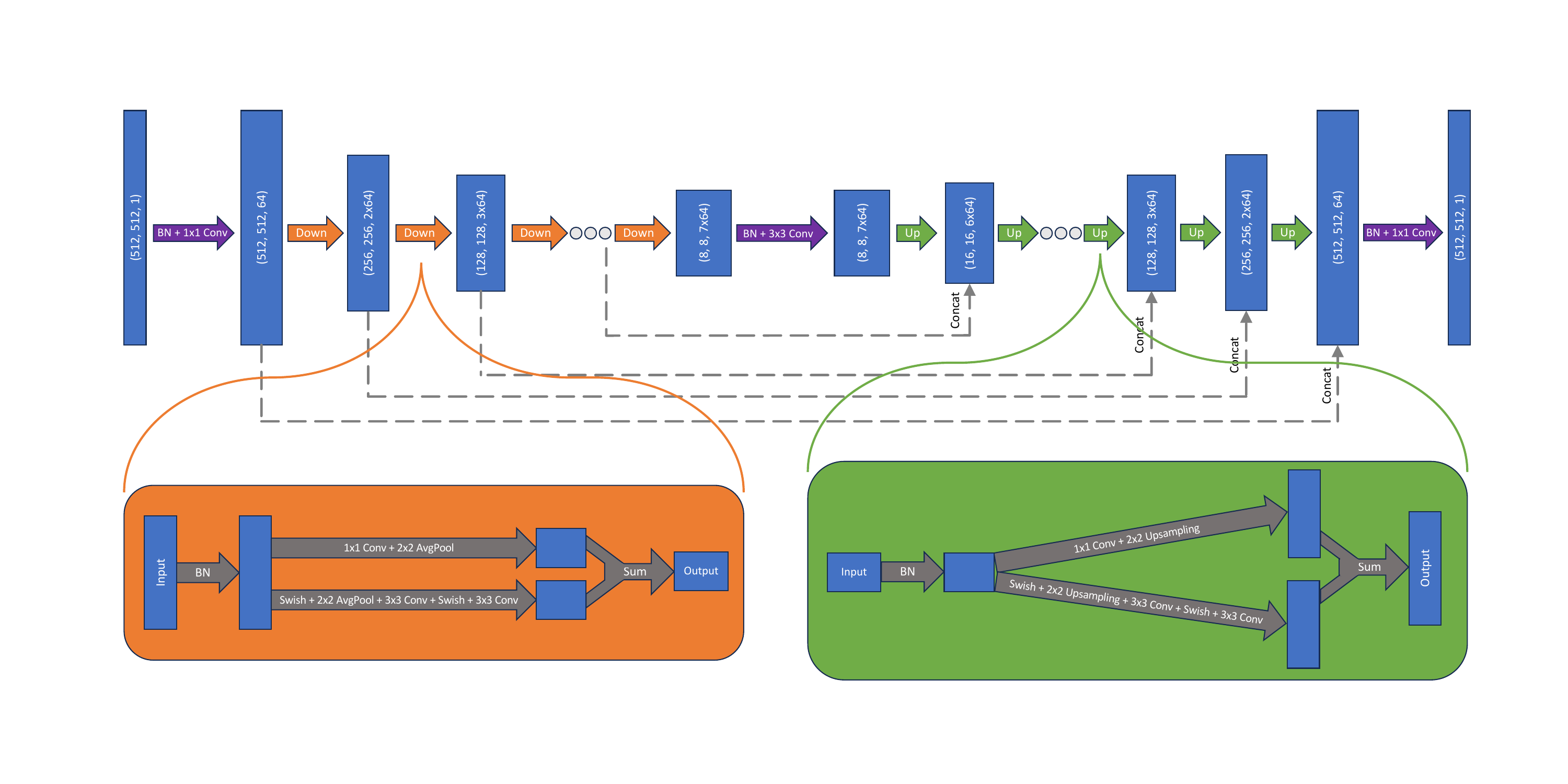}
    \caption{Schematic of the proposed neural network architecture. The internal structure of each downsampling and upsampling block is shown in the orange and green boxes respectively.}
    \label{fig:unet}
\end{figure*}

A complete schematic of the proposed architecture is shown in Figure \ref{fig:unet}. The proposed architecture has six down-sampling blocks and six up-sampling blocks.
These blocks are inspired by the residual blocks used in BigGAN \cite{biggan}.
The internal structure of each of these blocks is shown in the figure. 
In each ``down" block, the number of channels increases by 64, and the spatial dimension of the features decreases by a factor of two.
In the ``down" blocks, the first $3\times3$ convolution increases the number of channels. 
In each ``up" block, the number of output channels decreases by 64 relative to the previous block, and the spatial dimension of the features increases by a factor of two.
In the ``up" blocks, the first $3\times3$ convolution decreases the number of channels.
The network also includes skip connections, indicated by the gray dashed arrows in the figure.
Features extracted in the down-sampling path are concatenated with those of the same size in the up-sampling path before each ``up" block. 

Batch normalization~\cite{ioffe_batch_2015} is used throughout the network. This operation is indicated by ``BN" in the figure. The sigmoid linear unit, also known as ``SiLU" or ``Swish," is used as the activation function throughout the network, and is defined pointwise for $x \in \mathbb{R}$ by
\begin{equation}
    \text{Swish}(x) = x \cdot \sigma(x) = \frac{x}{ 1 + e^{-x}},
\end{equation}
where $\sigma$ is the standard logistic sigmoid function.

In total, our proposed architecture has approximately $22.3$ million trainable parameters.

\section{Experiments and Results}\label{sec:experiments}

\textbf{Dataset. } 
We created a synthetic dataset to test the proposed approach to silhouette tomography.
The dataset consisted of 2,036 3D objects
(e.g., airplanes, birdhouses, bottles, ...)
chosen at random
from the ShapeNet dataset \cite{shapenet2015}.
In order to generate binary ground truth volumes
that were suitable for our experiments,
we preprocessed each object in the following way:
We made the model watertight using code from \cite{huang_robust_2018}.
Then, we randomly rotated the model using Trimesh~\cite{dawsonhaggerty_trimesh_2019}
and voxelized on a $511\times511\times511$ grid
using the method of \cite{mescheder_occupancy_2019}%
\footnote{Available at \url{https://github.com/nikwl/inside_mesh}.}.

The dataset was then randomly split into 1,832 volumes for training, 102 volumes for validation, and 102 volumes for testing.

For each volume, we simulated X-ray measurements using parallel beam geometry with 8 equispaced views at angles in the range $[0, \pi)$. All X-ray projections and backprojections were simulated using the ASTRA toolbox~\cite{astra1}. 
With this choice of geometry, the reconstruction problem can be completely separated into 2D slices.

Most of the volumes in the dataset contained some slices with no object content.
These empty slices were not were not used during training, validation, or testing. Once these slices were removed, there were 698\,485 slices for training, 39\,628 slices for validation, and 38\,824 slices for testing.
An object from the testing set and one corresponding slice is shown in Figure \ref{fig:slice}.

\begin{figure}[H]
    \centering
    \includegraphics[width=\linewidth]{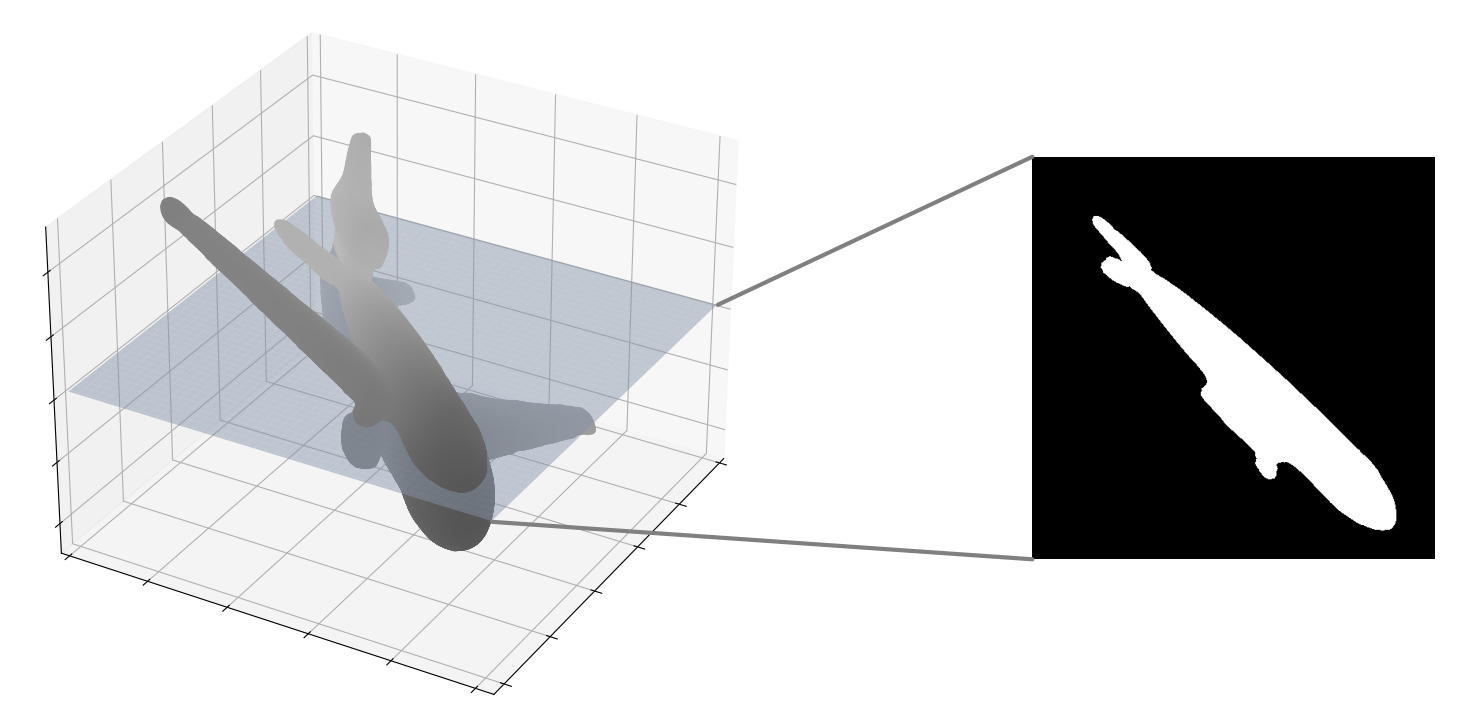}
    \caption{An object from the testing set and one slice from it.}
    \label{fig:slice}
\end{figure}

\textbf{Training details. } Because the proposed architecture repeatedly down-samples its input by a factor of two, it is most naturally suited to an input size with many factors of two. To accommodate this, every input to the network was zero-padded to a size of $512\times512$, and the output was cropped back to a size of $511\times511$.

The training objective was given by \eqref{eq:training_objective}.
The model was implemented using PyTorch and trained using the Adam optimizer with a learning rate of $10^{-5}$ and a batch size of 8.
The model was trained for a total of four passes over the training set (epochs),
after which both the training and validation losses had empirically converged.
The training took approximately 19 hours using eight NVIDIA GeForce RTX 2080 Ti GPUs.

\textbf{Quantitative comparison. }
Table \ref{tab:quant_results} gives a quantitative comparison of the reconstruction methods on the test set using three metrics: mean squared error (MSE), peak signal to noise ratio (PSNR), and the structural similarity index measure (SSIM). The numbers given in the table are the average for the three metrics across every slice of the test set. The SSIM was computed using the implementation in the TorchMetrics library with its default parameters \cite{metrics}. 
The row ``U-Net" gives metrics computed using the raw (floating point) output of the trained network.
The row ``binarized U-Net" gives metrics computed after thresholding the network's output at a value of $0.5$. There was one slice in the test set for which the thresholded neural network reconstruction had an MSE of 0, resulting in an infinite PSNR. The PSNR of this slice was discarded when computing the average PSNR over the test set.

\begin{table}
    \centering
    \begin{tabular}{llll}
    \toprule
        Method  & MSE  & PSNR  & SSIM \\ \midrule
        maximal  & 0.0628 & 15.29 & 0.921 \\
        U-Net & \textbf{0.0229} & \textbf{22.96} & 0.875 \\
        binarized U-Net & 0.0283 & 22.02 & \textbf{0.952} \\
        \bottomrule
    \end{tabular}
    \captionof{table}{Quantitative comparison of maximal reconstruction and the proposed method on a silhouette tomography reconstruction problem. The best result in each row is indicated in \textbf{bold}.}
    \label{tab:quant_results}
\end{table}


The neural network reconstruction significantly outperforms the maximal reconstruction in terms of both MSE and PSNR, both with and without the binarization. 
The maximal reconstruction outperforms the raw network output in terms of SSIM, but still falls short of the thresholded network output. This is likely because SSIM penalizes differences in overall contrast between images, where the variance of an image is used as a proxy for its contrast \cite{ssim}. Therefore the network is penalized for any uncertainty in its reconstructions, i.e. outputting pixel values between 0 and 1. Hence thresholding improves the SSIM of the supervised reconstruction, despite lowering its MSE and PSNR.

\textbf{Qualitative comparison. } Figure \ref{fig:qual_comp} gives a qualitative comparison of the maximal reconstruction and the learned reconstruction on a representative slice of the test set. It is evident that the maximal reconstruction significantly overestimates the object content of the slice, whereas the proposed method recovers some of the smooth structures present in the ground truth.


    
    

\begin{figure}
    \centering
    \begin{subfigure}[t]{0.48\linewidth}
        \includegraphics[width=\linewidth]{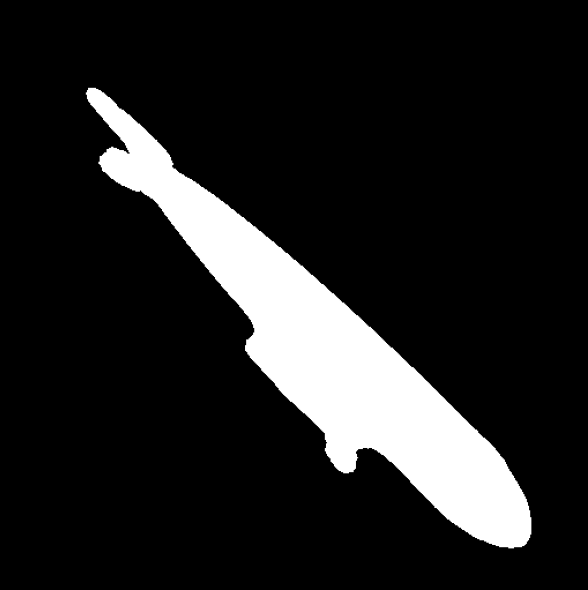}%
        \caption{\centering ground truth $\x$}%
    \end{subfigure}\vspace{2em}\\
    \begin{subfigure}[t]{0.48\linewidth}
        \includegraphics[width=\linewidth]{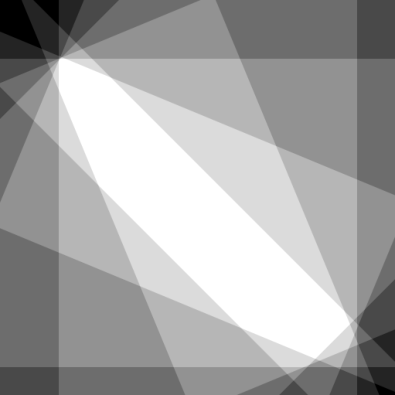}
        \caption{\centering network input $\H^T\y$}
    \end{subfigure}\hfill
    \begin{subfigure}[t]{0.48\linewidth}
        \includegraphics[width=\linewidth]{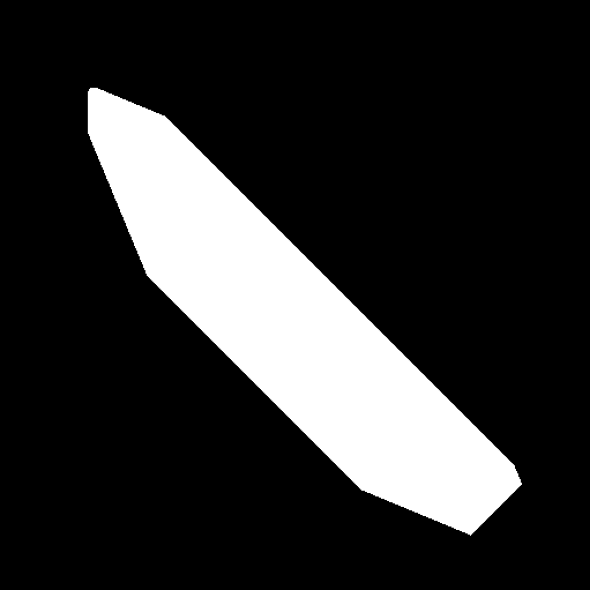}
        \caption{\centering maximal reconstruction $\x_\text{max}(\y)$}
    \end{subfigure}\vspace{2em}\\
    \begin{subfigure}[t]{0.48\linewidth}
        \includegraphics[width=\linewidth]{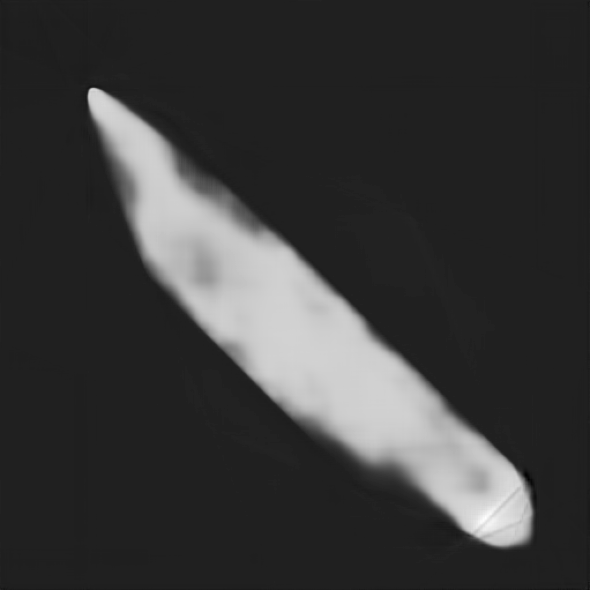}
        \caption{\centering NN reconstruction}
    \end{subfigure}\hfill
    \begin{subfigure}[t]{0.48\linewidth}
        \includegraphics[width=\linewidth]{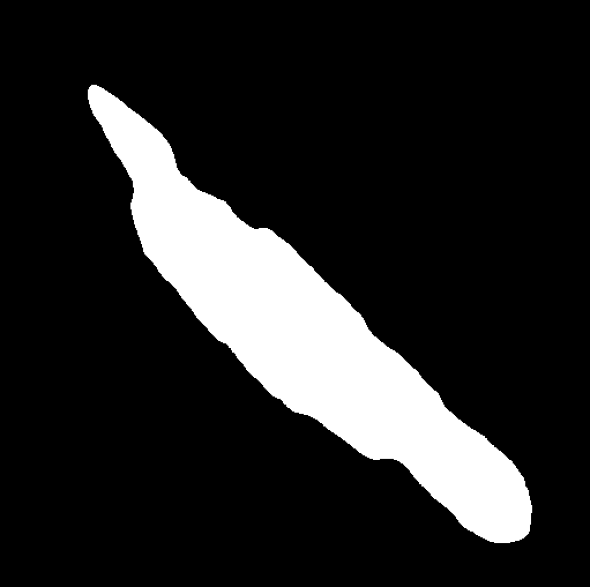}
        \caption{\centering NN reconstruction (binarized)}
    \end{subfigure}\vspace{1em}
    \caption{Qualitative comparison of the proposed reconstruction methods on one slice of the test set.}
    \label{fig:qual_comp}
\end{figure}

By reconstructing each slice of a 3D volume separately
and stacking them, we can perform 3D reconstruction.
Figure \ref{fig:3d_recon} gives a comparison of a full 3D volume of the test set reconstructed using both the maximal and proposed methods.

\begin{figure*}
    \centering
    
    \begin{subfigure}[t]{.3\linewidth}
        \includegraphics[width=\linewidth]{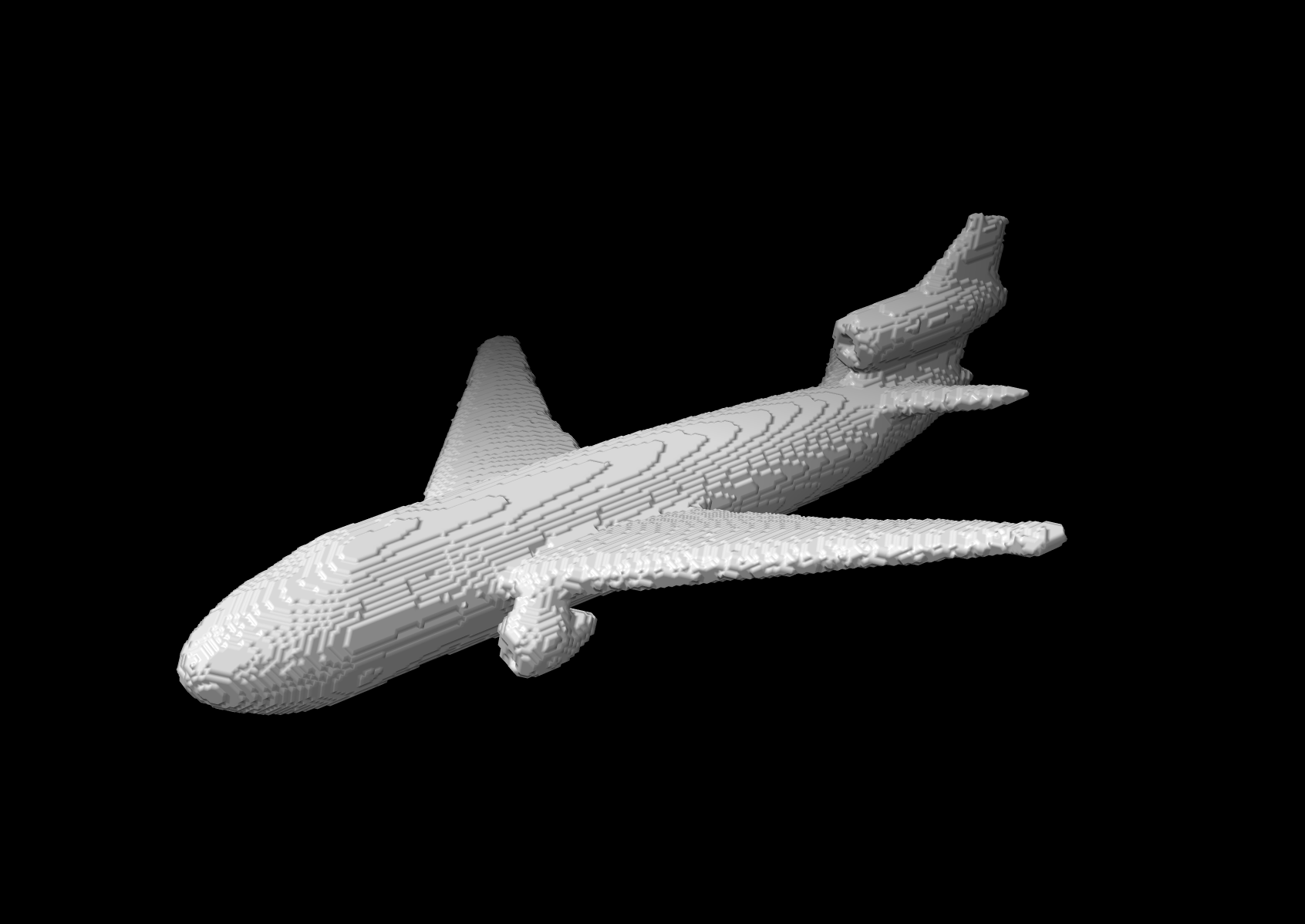}
        \caption{\centering ground truth}
    \end{subfigure}\hfill
    \begin{subfigure}[t]{.3\linewidth}
        \includegraphics[width=\linewidth]{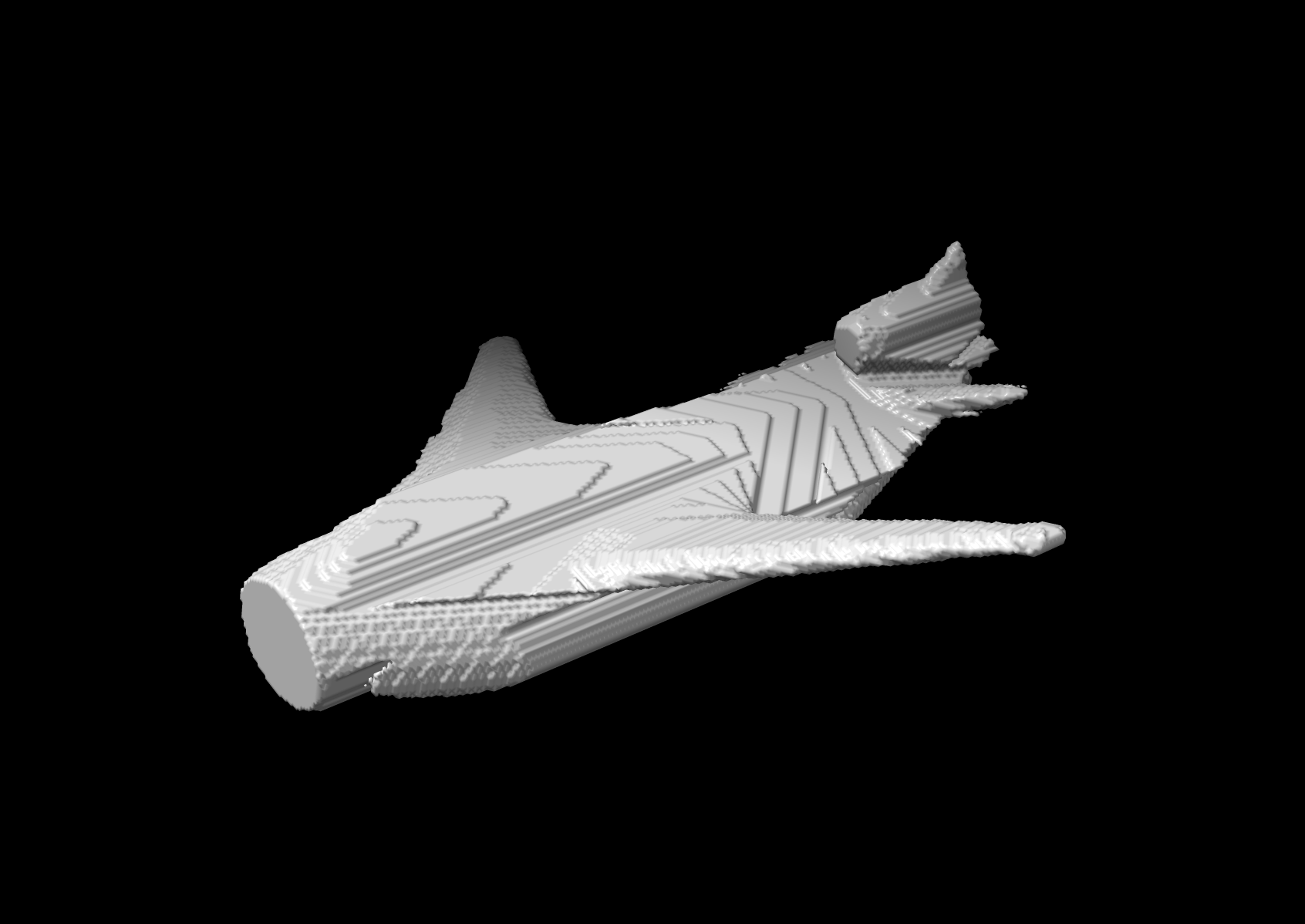}
        \caption{\centering maximal reconstruction}
    \end{subfigure}\hfill
    \begin{subfigure}[t]{.3\linewidth}
        \includegraphics[width=\linewidth]{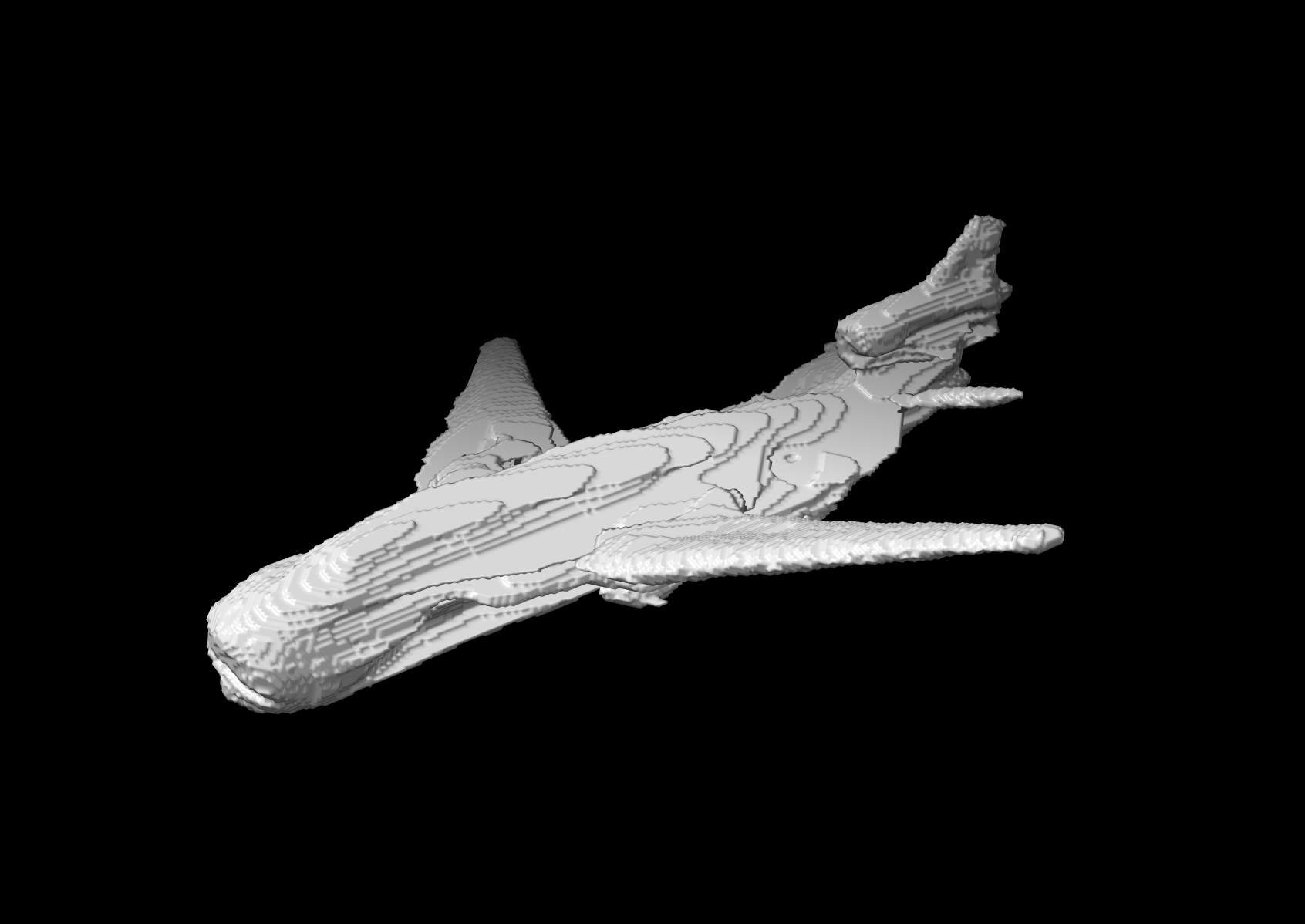}
        \caption{\centering NN reconstruction}
    \end{subfigure}
    \vspace{2em}
    \caption{Visual comparison of maximal and supervised approaches for 3D object reconstruction.}
    
    \label{fig:3d_recon}
\end{figure*}

\textbf{Comparison with linear tomography. } For extra context,
we include a comparison between the proposed supervised approach to silhouette tomography and a similar approach to linear tomography as formulated in \eqref{eq:linear}. To do this, the proposed architecture was trained with training objective \eqref{eq:training_objective} with $\y_d = \H\x_d$. All training settings, including optimizer, batch size, learning rate, and length of training were identical to the model trained for silhouette tomography.

Quantitative results are given in Table \ref{tab:linear_quant_results}. We again present metrics computed using both the floating point network output and the network output after thresholding at a value of 0.5. In this case, there were 74 slices in the test set for which the thresholded network output had an MSE of 0, giving an infinite PSNR. The PSNR of these slices was discarded when computing average PSNR on the test set.

The supervised model trained for linear tomography performs substantially better than the model trained for silhouette tomography across all metrics. This is unsurprising, because the ST problem is more ill-posed than the linear tomography problem. Moreover, we propose ST as a possible approach to tomography in settings where linear tomography is impractical. This experimental setup is ideal for linear tomography to succeed: the forward operator $\H$ is known perfectly, the measurements $\y$ are free of any additional noise, and the objects being imaged are completely uniform in material composition.

This comparison is valuable because it quantifies the size of the performance gap between linear and silhouette tomography under ideal conditions. Additionally, it demonstrates that the proposed neural network architecture is highly successful in the linear setting, which suggests that it is also a reasonable choice for silhouette tomography reconstruction.

\begin{table}
    \centering
    \begin{tabular}{llll}
    \toprule
        Method  & MSE  & PSNR  & SSIM \\ \midrule
        U-Net & \textbf{0.00123} & \textbf{34.20} & 0.982 \\
        binarized U-Net  & 0.00160 & 33.27 & \textbf{0.990} \\
        \bottomrule
    \end{tabular}
    \captionof{table}{Quantitative results for supervised linear tomography reconstruction using the proposed network architecture. The best result in each row is indicated in \textbf{bold}.}
    \label{tab:linear_quant_results}
\end{table}


\section{Discussion and Conclusion}\label{sec:discussion}

We introduced silhouette tomography (ST), a novel approach to X-ray CT that formulates the image reconstruction problem purely geometrically.
Whereas both traditional model-based reconstruction
and existing learning-based approaches to X-ray CT
rely on an accurate model of the imaging system---%
either for reconstruction or to generate realistic training data---%
the ST approach requires only geometric information,
i.e., the projection angles.
Rather than reconstructing a 3D absorption map,
ST reconstructs a binary map representing whether the object does or does not exist at each point in space.
This type of reconstruction is particularly well-suited to nondestructive testing and security applications
where object shapes are more important than density variations.

We mathematically formulated the ST problem and showed that when the problem has a solution, it always admits a unique solution that is maximal in norm.
We then proposed a supervised deep learning approach for ST reconstruction, and demonstrated that this approach significantly outperforms the maximal solution across multiple reconstruction metrics.
These results demonstrate that
supervised learning is an effective method for ameliorating
the ill-posedness of the ST reconstruction problem,
at least on our shape dataset.

Extensions to this work could involve
adapting the network architecture to handle fully 3D reconstruction,
performing experiments on real data,
and developing automated methods to segment objects in X-ray projection images.

\printbibliography

\end{document}